\newtheorem{theorem}{Theorem}
\newtheorem{remark}{Remark}%
\def\BibTeX{{\rm B\kern-.05em{\sc i\kern-.025em b}\kern-.08em
    T\kern-.1667em\lower.7ex\hbox{E}\kern-.125emX}}
\begin{document}

\title{Opinion Dynamics Optimization Through Noncooperative Differential Games\\

\thanks{This work was supported by SGS, V\v{S}B - Technical University of Ostrava, Czech Republic, under grant No. SP2023/12 “Parallel processing of Big Data X”.}
}

\author{\IEEEauthorblockN{Hossein B. Jond}
\IEEEauthorblockA{\textit{Department of Computer Science} \\
\textit{V\v{S}B-Technical University of Ostrava}\\
Ostrava-Poruba, Czech Republic \\
hossein.barghi.jond@vsb.cz}

}

\maketitle

\begin{abstract}
In this paper, I study optimizing the opinion formation of a social network of a population of individuals on a graph whose opinion evolves according to the Hegselmann-Krause model for opinion dynamics. I propose an optimization problem based on a differential game for a population of individuals who are not stubborn. The objective of each individual is to seek an optimal control policy for her own opinion evolution by optimizing a personal performance index. The Nash equilibrium actions and the associated opinion trajectory with the equilibrium actions are derived for the opinion optimization model using Pontryagin's principle. The game strategies were executed on the well-known Zachary’s Karate Club social network. The resulting opinion trajectories associated with the game strategies showed that in non-stubborn Zachary’s network, the opinions moved toward the average opinion of the network, but a consensus of final opinions did not necessarily emerge.
\end{abstract}

\begin{IEEEkeywords}
game theory, Hegselmann-Krause model, opinion dynamics, optimization
\end{IEEEkeywords}

\section{Introduction}
\label{Int}
Opinion dynamics, with its origins in sociology, is the study of the dynamical processes of public opinion formation, diffusion, and evolution~\cite{Liang,Friedkin}. Opinion dynamics have rapidly spread beyond sociology into other disciplines such as physics, mathematics, computer science, and control theory, to name a few. Among the various proposed models, the bounded confidence models of opinion dynamics have attracted more attention~\cite{Dittmer,Hegselmann-Krause}. The bounded confidence concept was incorporated into opinion dynamics by Hegselmann and Krause~\cite{Hegselmann-Krause}. In the Hegselmann-Krause (HK) model, a finite number of agents fuse their opinions only with those of others whose opinions do not differ more than their confidence bound. Several works have studied the extensions of the HK model under various settings~\cite{Etesami,Chen2020,Lorenz}. 

Opinion dynamics optimization is an emerging topic of interest for those dealing with social networks. In~\cite{Ghezelbash}, an optimization procedure is presented to select informed agents to prevent the influence of the annoying agents in the network. In another study, the problem of optimizing the placement of stubborn agents in a social network with the aim of maximally influencing the population was studied~\cite{Hunter}.

One interesting approach to studying opinion dynamics in social networks is a game-theoretic approach. In~\cite{Etesami}, the authors showed the HK model can be formulated as a sequence of the best response dynamics of a potential game. The work~\cite{Niazi} investigated a non-cooperative differential game model of opinion dynamics with an open-loop information structure. The opinion dynamics of a multiple-population social network were investigated through the application of a multiple-population mean field game in~\cite{Banez}. The work~\cite{Szolnoki} investigated the spread of opinions in a binary opinion model as an evolutionary game. A differential game in~\cite{YILDIZ} was used to model the opinion behavior of stubborn agents in a social network in the presence of a troll.

In this study, I model and evaluate the opinion behavior of selfish and self-interested agents in a social network whose opinion fusion rules conform to the HK opinion dynamics model and are constrained by a communication graph using non-cooperative differential games. The simulation results indicate that a network whose population has adopted game equilibrium strategies does not have a consensus of final opinions. 

The main contributions of this paper in comparison with similar works in \cite{Niazi,YILDIZ} are twofold. First, \cite{Niazi,YILDIZ} considered the single integrator dynamics model for opinion evolution, whereas I utilize the HK opinion dynamics model. The HK model, in contrast to the model with a single integrator, takes into account the topological structure of the social network. Second, the cost functions in \cite{Niazi,YILDIZ} involve minimizing disagreements in the network throughout the entire opinion formation process. In my optimization, I take final opinion disagreements into account. This choice of cost function results in more concise explicit expressions for the Nash equilibrium and the opinion trajectories associated with it. In addition, the distributed information structure of Nash equilibrium is observed. 

The paper is organized as follows. In Section~\ref{formulation}, I present a differential game model of the HK opinion dynamics for opinion optimization. In Section~\ref{main}, I derive the open-loop Nash equilibrium solution and the associated opinion trajectory with it for the optimization model. In addition, I also present an optimal control model of the HK opinion dynamics. In Section~\ref{simulation}, the results from the previous section are implemented on a real-world social network to observe the evolution of individuals' opinions. Conclusions and future works are discussed in Section~\ref{conclusions}.

\section{Opinion Dynamics Optimization}\label{formulation}

Consider a social network of $n$ agents indexed $1$ through $n$ on a communication graph $\mathcal{G}(\mathcal{V},\mathcal{E})$. The set of vertices $\mathcal{V}=\{1,\cdots,n\}$ corresponds to the set of agents. Each edge $(i,j) \in \mathcal{E}$ represents a mutual opinion flow between node $i$ and node $j$. The set of neighbors of vertex $i$ is denoted by $\mathcal{N}_i=\{j\in\mathcal{V}:(i,j) \mbox{ or }(j,i)\in \mathcal{E}, j\neq i\}$. I make the following assumption. The social graph $\mathcal{G}$ is connected. The connectivity of $\mathcal{G}$ means at least one globally reachable node (a root node of a spanning tree on the graph). The connected social graph means that each agent has at least one neighbor with whom they mutually interact (i.e., $\mathcal{N}_i\neq \emptyset$ or $|\mathcal{N}_i|\neq 0$ for all $i\in\mathcal{V}$).

Let $x_i(t)\in \mathit{\mathbb{R}}$ be the opinion of agent $i$ at time $t\in[0,t_f]$ where $t_f$ is a terminal time. In the HK model, the evolution of $x_i(t)$ at each stage $k=0,1,\cdots$ is as follows
\begin{equation}\label{eq:HK}
    x_i(k+1)=\frac{1}{|\mathcal{N}_i|}\sum_{j\in\mathcal{N}_i}x_j(k).
\end{equation}
In this model, each agent's opinion at each stage is the average opinion of her graph neighbors. The graph neighbors here are equivalent to the bounded confidence concept in the original HK model. An agent with the confidence bound $\epsilon_i$ takes only those agents $j$ into account whose opinions differ from her own not bigger than her confidence bound. The set of such agents $j$ is denoted by $N_i(t)=\{j\in\mathcal{V}:|x_i(t)-x_j(t)|\leq \epsilon_i\}$ which in terms of a time-invariant social graph $\mathcal{G}(\mathcal{V},\mathcal{E})$, is equivalent to the set of the graph neighbors for agent $i$ as defined before, i.e., $\mathcal{N}_i$. 
Another interpretation for $\mathcal{N}_i$ could be the set of agents whom a particular agent $i$ trusts to share and fuse opinion~\cite{BABAKHANBAK}.

Define $u_i(t)\in \mathit{\mathbb{R}}$ as agent $i$'s influence effort or simply her control input at time $t\in[0,t_f]$. The HK model (\ref{eq:HK}) with the
control input in continuous time is given by
\begin{align*}
    \dot{x}_i(t)&=\frac{1}{|\mathcal{N}_i|}\sum_{j\in\mathcal{N}_i}\big(x_j(t)-x_i(t)\big)+b_iu_i(t)\\&=\frac{1}{|\mathcal{N}_i|}\big(\sum_{j\in\mathcal{N}_i}x_j(t)-|\mathcal{N}_i|x_i(t)\big)+b_iu_i(t),
\end{align*}
or simply,
\begin{equation}\label{eq:dynamics0}
    \dot{x}_i(t)=\frac{1}{|\mathcal{N}_i|}\sum_{j\in\mathcal{N}_i}x_j(t)-x_i(t)+b_iu_i(t),
\end{equation}
where $b_i$ is a nonzero constant. 

The HK model assumes that the social network population is not stubborn, which means that the individuals do not hold any prejudices. In other words, their opinion evolves without any influence from their initial opinions, which suits modeling a social network population whose attitudes change independently of their previous beliefs. The process of opinion formation in a rational, non-stubborn social network at the individual level can be subject to each agent attempting to minimize her disagreement with her graph neighbors while, in the meantime, expending the least amount of influence effort. An appropriate cost function that characterizes such behavior or preferences in a social network for agent $i\in \mathcal{V}$ to minimize is  
\begin{equation} \label{eq:quadratic-cost0-nonstub}
J_i= \frac{1}{|\mathcal{N}_i|}\sum_{j\in\mathcal{N}_i}\big(x_i(t_f)-x_j(t_f)\big)^2+\int_{0}^{t_f} r_iu_i^2(t)~\mathrm{dt}, 
\end{equation}
where $r_i\in \mathit{\mathbb{R}}$ is a positive scalar ($r_i>0$). The cost function in (\ref{eq:quadratic-cost0-nonstub}) has two terms. The first term averages the sum of disagreements between the final opinions of each agent and her graph neighbors. The second term is the weighted control or influence effort made during the entire opinion formation process. 

In the control community, the optimization problem that emerged in this work is known as differential game problems \cite{Engwerda}. 
In the context of a differential game, each agent of the network is referred to as a player. In this context, each player seeks the control $u_i(t)$ that minimizes her cost function $J_i$ with the given initial opinions, subject to the continuous-time HK opinion evolution equation (\ref{eq:dynamics0}). In other words, the players in the game seek to minimize their cost functions in order to find their control or influence strategies $u_i(t)$ while their opinions evolve according to the differential equation (\ref{eq:dynamics0}). The behavior of self-interested players in the game of opinion formation is best reflected via noncooperative game theory. Under the framework of noncooperative games, the players can not make binding agreements, and therefore, the solution (i.e., the Nash equilibrium) has to be self-enforcing, meaning that once it is agreed upon, nobody has the incentive to deviate from \cite{Damme}. In the next section, I derive the open-loop Nash equilibrium solution for the optimization problem in (\ref{eq:dynamics0}) and (\ref{eq:quadratic-cost0-nonstub}).

\section{Main Result}\label{main}

Nash equilibrium is the main solution concept in noncooperative game scenarios. A Nash equilibrium is a strategy combination of all players in the game with the property that no one can gain a lower cost by unilaterally deviating from it. The open-loop Nash equilibrium is defined as a set of admissible actions ($u_1^*,\cdots,u_n^* $) if for all admissible ($u_1,\cdots,u_n$) the inequalities
$J_i (u_1^*,\cdots,u_{i-1}^*,u_i^*,u_{i+1}^*,\cdots,u_n^* )\leq
J_i (u_1^*,\cdots,u_{i-1}^*,u_i,u_{i+1}^*,\cdots,u_n^* )$
hold for $i\in\{1,\cdots,n\}$ where $u_i\in\Gamma_i$ and $\Gamma_i$ is the admissible strategy set for player $i$. The noncooperative differential game and the unique Nash equilibrium associated with it are discussed in \cite{Engwerda}.

In the following, I present the main result, which is the open-loop Nash equilibrium solutions and the associated opinion trajectories with the equilibrium actions for the previously introduced optimization problem. Before that, I define the following vectors and matrices to restate the optimization problem in a compact form. 

Define $A_i=[a_{ij}]$ where
\begin{equation*}
       a_{ij} =
\left\{
	\begin{array}{ll}
		1  & \mbox{if } (i,j)\in\mathcal{E} \mbox{ and } i\neq j,\\
		0  & \mbox{if } (i,j)\notin\mathcal{E} \mbox{ and } i\neq j, \\
            0 & \mbox{if } i=j, 
	\end{array}
\right.
\end{equation*}
to be the adjacency matrix for each agent $i\in\mathcal{V}$. The degree matrix is $D_i=\mathrm{diag}(0,\cdots,|\mathcal{N}_i|,\cdots,0)$ where "$\mathrm{diag}{·}$" stands for diagonal matrix. The graph Laplacian matrix for each agent $i$ is defined as
\begin{equation}\label{eq:Laplacian}
    L_i=D_i-A_i.
\end{equation}
The global adjacency, degree, and Laplacian matrices are $A=\sum_i A_i$,  $D=\sum_i D_i$, and $L=\sum_i L_i$. All the aforementioned matrices are symmetric. Additionally, let $W_i=\mathrm{diag}(0,\cdots,\omega_{ii},\cdots,0)$ and $W=\sum_i W_i$. Define vectors $x(t)=[x_1(t),\cdots,x_n(t)]^\top$, $B_i=[0,\cdots,b_i,\cdots,0]^\top$, and matrix $\Lambda=D^{-1}A-I$.

The optimization in (\ref{eq:dynamics0}) and (\ref{eq:quadratic-cost0-nonstub}) is restated in compact form as follows
\begin{align}\label{eq:nonstub-opt}
    & \min_{u_i}~  J_i(u_i)= \frac{1}{|\mathcal{N}_i|}x^\top(t_f)L_ix(t_f)+\int_{0}^{t_f} r_iu_i^2(t)~\mathrm{dt},\\
    &{s.t.}\notag\\
    & \dot{x}(t)=\Lambda x(t)+\sum_{i=1}^n B_iu_i(t), \quad x(0)=x_0, \quad i=1,\ldots,n. \nonumber
\end{align}

The presence of the Laplacian $L_i$ in the cost function above is due to its sum-of-squares property (see~\cite{Jond}).
\begin{equation*}
    x^\top(t_f) L_ix(t_f)=\sum_{j\in\mathcal{N}_i}\big(x_i(t_f)-x_j(t_f)\big)^2.
\end{equation*}

\begin{theorem}\label{theorem:nonstub}
 The unique Nash equilibrium actions and the associated opinion trajectory with these equilibrium actions for opinion formation of a social network as the noncooperative differential game in (\ref{eq:nonstub-opt}) are given by 
 \begin{align}
         u_i(t&)=-\frac{1}{r_i|\mathcal{N}_i|}B_i^\top\mathrm{e}^{(t_f-t)\Lambda^\top}L_iH^{-1}(t_f)\mathrm{e}^{t_f\Lambda}x_0, \label{eq:Nash-nonstub}\\
         x(t)&=\Big(\mathrm{e}^{t\Lambda}-\Psi(t)\Delta H^{-1}(t_f)\mathrm{e}^{t_f\Lambda}\Big)x_0,  \label{eq:Nash-tr-nonstub}
 \end{align}
 where
 \begin{align}
     &H(t_f)=I+\Psi(t_f)\Delta, \label{eq:matrix-H}\\ 
     &\Psi(t)=[\Psi_1(t),\cdots,\Psi_n(t)],\label{eq:matrix-Psi}\\
     &\Delta=[ \frac{1}{|\mathcal{N}_1|}L_1,\cdots, \frac{1}{|\mathcal{N}_n|}L_n]^\top,\label{eq:matrix-Delta} \\
     & \Psi_i(t)=\int_0^{t}\mathrm{e}^{(t-\tau)\Lambda}S_i\mathrm{e}^{(t-\tau)\Lambda^\top}~\mathrm{d\tau}, \quad S_i= \frac{1}{r_i}B_iB_i^\top.\label{eq:matrix-Psi-i}
 \end{align}
\end{theorem}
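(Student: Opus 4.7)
The plan is to derive the open-loop Nash equilibrium by applying Pontryagin's minimum principle to each player's best-response problem and then reducing the resulting coupled two-point boundary-value problem to a single linear system whose solution yields both (\ref{eq:Nash-nonstub}) and (\ref{eq:Nash-tr-nonstub}).

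First, I would fix the strategies $(u_j)_{j\neq i}$ of the other players and treat (\ref{eq:nonstub-opt}) as a linear-quadratic optimal-control problem for player $i$. Writing the Hamiltonian $\mathcal{H}_i = r_i u_i^2 + \lambda_i^\top(\Lambda x + \sum_{j=1}^n B_j u_j)$, stationarity in $u_i$ yields $u_i^*(t) = -\tfrac{1}{2r_i} B_i^\top \lambda_i(t)$. The adjoint equation is $\dot\lambda_i = -\Lambda^\top \lambda_i$, and since $L_i$ is symmetric the terminal penalty in (\ref{eq:nonstub-opt}) imposes the transversality condition $\lambda_i(t_f) = \tfrac{2}{|\mathcal{N}_i|} L_i x(t_f)$. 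Integrating the costate backward gives $\lambda_i(t) = \tfrac{2}{|\mathcal{N}_i|}\mathrm{e}^{(t_f-t)\Lambda^\top} L_i x(t_f)$, so the candidate action is
\[
u_i^*(t) = -\frac{1}{r_i|\mathcal{N}_i|} B_i^\top \mathrm{e}^{(t_f-t)\Lambda^\top} L_i x(t_f),
\]
which already matches (\ref{eq:Nash-nonstub}) once $x(t_f)$ is expressed in terms of $x_0$.

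Next, I would substitute the candidate $u_i^*$ into the state equation and apply variation of parameters starting from $x(0)=x_0$. Player $i$'s contribution produces a convolution integral in which $S_i = \tfrac{1}{r_i}B_iB_i^\top$ and an exponential factor combine into the Gramian-type matrix $\Psi_i(t)$ defined in (\ref{eq:matrix-Psi-i}). Stacking the $n$ contributions via the block notation $\Psi(t)$ and $\Delta$ from (\ref{eq:matrix-Psi})--(\ref{eq:matrix-Delta}) expresses $x(t)$ as an affine function of the unknown $x(t_f)$. Setting $t = t_f$ collapses this relation to $H(t_f)\,x(t_f) = \mathrm{e}^{t_f\Lambda} x_0$ with $H(t_f) = I + \Psi(t_f)\Delta$ as in (\ref{eq:matrix-H}). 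Inserting $x(t_f) = H^{-1}(t_f)\mathrm{e}^{t_f\Lambda}x_0$ back into the expressions for $x(t)$ and $u_i^*(t)$ delivers (\ref{eq:Nash-tr-nonstub}) and (\ref{eq:Nash-nonstub}).

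Finally, for uniqueness, each player's best-response problem is strictly convex in $u_i$ (since $r_i>0$) over linear dynamics, so Pontryagin's first-order conditions are necessary and sufficient and pin down a unique best response to any profile of the other players. Uniqueness of the Nash equilibrium therefore reduces to the solvability of the single linear equation $H(t_f)\,x(t_f) = \mathrm{e}^{t_f\Lambda} x_0$. I expect the invertibility of $H(t_f)$ to be the main non-routine step: either I would invoke the standard solvability condition for open-loop LQ differential games from \cite{Engwerda}, or verify directly that $-1$ is not an eigenvalue of $\Psi(t_f)\Delta$, exploiting positive semidefiniteness of $\Psi_i(t_f)$ and $L_i$ together with the fact that the $S_i$ and $L_i$ are supported on different blocks. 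Everything else is routine matrix and convolution bookkeeping.
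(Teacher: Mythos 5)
Your proposal follows essentially the same route as the paper's proof: Pontryagin's principle for each player's best response, backward integration of the costate $\dot\lambda_i=-\Lambda^\top\lambda_i$ with terminal condition proportional to $L_i x(t_f)$, variation of parameters to express $x(t)$ affinely in the unknown $x(t_f)$, and collapsing at $t=t_f$ to $H(t_f)x(t_f)=\mathrm{e}^{t_f\Lambda}x_0$. It is correct in outline — indeed you track the two factors of $2$ (from $\partial(r_iu_i^2)/\partial u_i$ and from the quadratic terminal cost) that the paper silently drops and that cancel in the final formula — and your treatment of the invertibility of $H(t_f)$ as a hypothesis/condition matches the paper, which likewise conditions on it rather than proving it.
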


\begin{proof}
Define the Hamiltonian 
\begin{equation}\label{eq:Hamiltonian}
        \mathcal{H}_i=r_iu_i^2(t)+\lambda_i^\top(t)\Big(\Lambda x(t)+\sum_{i=1}^n B_iu_i(t)\Big), 
\end{equation}
where $\lambda_i(t)$ is a co-state vector.
According to Pontryagin’s principle, the necessary conditions for optimality are $\frac{\partial \mathcal{H}_i}{\partial u_i}=0$ and $\dot{\lambda}_i(t)=-\frac{\partial \mathcal{H}_i}{\partial x}$. Applying the necessary conditions on the Hamiltonian yield
\begin{align}
    &u_i(t)=-\frac{1}{r_i}B_i^\top\lambda_i(t), \label{eq:neccesary-u}\\
    &\dot{\lambda}_i(t)=-\Lambda^\top\lambda_i(t), \label{eq:neccesary-lam}
\end{align}
with the terminal condition 
\begin{equation}\label{eq:Lambda-compact}
    \lambda_i(t_f)=\frac{1}{|\mathcal{N}_i|}L_ix(t_f).
\end{equation}

The solution of (\ref{eq:neccesary-lam}) is uniquely determined by  
\begin{equation}\label{eq:LTVsoln}
\lambda_i(t)=\mathrm{e}^{(t_f-t)\Lambda^\top}\lambda_i(t_f).
\end{equation} 
Substituting this solution in (\ref{eq:neccesary-u}) yields
\begin{equation}\label{eq:Nash00}
   u_i(t)=-\frac{1}{r_i}B_i^\top\mathrm{e}^{(t_f-t)\Lambda^\top}\lambda_i(t_f). 
\end{equation}

Substituting (\ref{eq:neccesary-u}) into the opinion dynamics in (\ref{eq:nonstub-opt}) and then using (\ref{eq:Nash00}), I get
\begin{align}
    \label{eq:dynamics-lambda}
    \dot{x}(t)&=\Lambda x(t)-\sum_{i=1}^nS_i\lambda_i(t) \nonumber\\
    &=\Lambda x(t)-\sum_{i=1}^n S_i\mathrm{e}^{(t_f-t)\Lambda^\top}\lambda_i(t_f).
\end{align}
The solution of (\ref{eq:dynamics-lambda}) at $t$ is given by
\begin{equation}\label{eq:dynamics-sol-00}
    x(t)=\mathrm{e}^{t\Lambda}x_0-\sum_{i=1}^n\Psi_i(t)\lambda_i(t_f), 
\end{equation}
where $\Psi_i(t)$ is defined in (\ref{eq:matrix-Psi-i}).
Using notation (\ref{eq:matrix-Psi}), equation (\ref{eq:dynamics-sol-00}) is rewritten as
\begin{equation}\label{eq:dynamics-sol}
    x(t)=\mathrm{e}^{t\Lambda}x_0-\Psi(t)\lambda(t_f). 
\end{equation}

Stacking (\ref{eq:Lambda-compact}) for $i = 1, \cdots, n$ yields
\begin{equation}\label{eq:lamtf-compact}
    \lambda(t_f)=\Delta x(t_f)
\end{equation}
with $\Delta$ defined in (\ref{eq:matrix-Delta}) and $\lambda(t_f)=[\lambda_1^\top(t_f),\cdots,\lambda_n^\top(t_f)]^\top$. Substituting $\lambda(t_f)$ from (\ref{eq:lamtf-compact}) then into (\ref{eq:dynamics-sol}) at $t_f$ yields 
 \begin{align*}
x(t_f)=\mathrm{e}^{t_f\Lambda}x_0-\Psi(t_f)\Delta x(t_f),
\end{align*}
which can be rewritten as
 \begin{align}\label{eq:state-equation-simplify}
\big(I+\Psi(t_f)\Delta\big)x(t_f)=\mathrm{e}^{t_f\Lambda}x_0.
\end{align}

Using the notation $H(t_f)$ in (\ref{eq:matrix-H}), equation (\ref{eq:state-equation-simplify}) is rewritten as
 \begin{align}\label{eq:state-x(tf)}
x(t_f)=H^{-1}(t_f)\mathrm{e}^{t_f\Lambda}x_0.
\end{align}

If the game has a unique open-loop Nash equilibrium, then (\ref{eq:state-equation-simplify}) is satisfied for any arbitrary $x_0$ and $x(t_f)$. Equivalently, if matrix $H^{-1}(t_f)$ has an inverse for any arbitrary $x(t_f)$, the unique equilibrium actions exist and could be calculated for all $t\in[0,t_f]$. By substituting (\ref{eq:state-x(tf)}) into (\ref{eq:Lambda-compact}) and re-substituting (\ref{eq:Lambda-compact}) in (\ref{eq:neccesary-u}), I obtain (\ref{eq:Nash-nonstub}). 

Substituting $\lambda(t_f)$ from (\ref{eq:lamtf-compact}) and then re-submitting $x(t_f)$ from (\ref{eq:state-x(tf)}), I have 
\begin{align}\label{eq:final-tr}
x(t)&=\mathrm{e}^{t\Lambda}x_0 -\Psi(t)\lambda(t_f) \nonumber \\
&=\mathrm{e}^{t\Lambda}x_0-\Psi(t)\Delta x(t_f)\\ \nonumber 
&=\mathrm{e}^{t\Lambda}x_0-\Psi(t)\Delta H^{-1}(t_f)\mathrm{e}^{t_f\Lambda}x_0, \nonumber 
\end{align}
or in its final form (\ref{eq:Nash-tr-nonstub}). 
This concludes the proof.
\end{proof}

\begin{remark}\label{remark:dist} From the definition of the Laplacian matrix $L_i$ in (\ref{eq:Laplacian}), it can be easily figured out that all nonzero elements of $L_i$ are only in the $i$th row and column. Moreover, the nonzero elements of the $i$th row and column are at the indices $j\in\mathcal{N}_i$.
Based on this structure, it can be deduced that the matrix product $L_iH^{-1}(t_f)\mathrm{e}^{t_f\Lambda}x_0$ only requires the initial $x_{j0}$ entries for $\forall j\in\mathcal{N}_i$ in $x_0$. Thus, I can draw the conclusion that the equilibrium actions (\ref{eq:Nash-nonstub}) are distributed in the sense that each agent only uses local information of her own and her graph neighbors without using any global information of the communication graph.
\end{remark}

\subsection{Global Optimal}
In the game-based optimization~(\ref{eq:nonstub-opt}), each individual by minimizing her cost function attains the locally optimal Nash equilibrium. However, the game equilibrium in general does not correspond to the global social optimum, which minimizes the sum of all costs. 

Let $R=\mathrm{diag}(r_1,\cdots,r_n)$, $B=\mathrm{diag}(b_1,\cdots,b_n)$, and $u(t)=[u_1(t),\cdots,u_n(t)]^\top$. The global optimization problem for the social network is 
\begin{align}\label{eq:nonstub-opt-g}
    &\min_{u}~  J(u(t))= x^\top(t_f)Lx(t_f)+\int_{0}^{t_f}u^\top(t)Ru(t)~\mathrm{dt},\\
    &{s.t.}\notag\\
     &\dot{x}(t)=\Lambda x(t)+ Bu(t), \quad x(0)=x_0. \nonumber
\end{align}

Define the Hamiltonian 
\begin{equation}\label{eq:Hamiltonian2}
        \mathcal{H}=u^\top(t)Ru(t)+\lambda^\top(t)\Big(\Lambda x(t)+Bu(t)\Big). 
\end{equation}
Applying the necessary conditions on the Hamiltonian yield
\begin{align}
    &u(t)=-R^{-1}B^\top\lambda(t), \label{eq:neccesary-u-op}\\
    &\dot{\lambda}(t)=-\Lambda^\top\lambda(t),\quad  \lambda(t_f)=Lx(t_f). \label{eq:neccesary-lam-op}
\end{align}

The solution of (\ref{eq:neccesary-lam-op}) is  
\begin{equation}\label{eq:LTVsoln-op}
\lambda(t)=\mathrm{e}^{(t_f-t)\Lambda^\top}\lambda(t_f).
\end{equation} 
Substituting this solution in (\ref{eq:neccesary-u-op}) yields the optimal control actions
\begin{equation}\label{eq:optimal00}
   u(t)=-R^{-1}B^\top\mathrm{e}^{(t_f-t)\Lambda^\top}\lambda(t_f). 
\end{equation}

Substituting (\ref{eq:neccesary-u-op}) into the opinion dynamics in (\ref{eq:nonstub-opt-g}) and then using (\ref{eq:neccesary-u-op}), I get
\begin{align}
    \label{eq:dynamics-lambda-op}
    \dot{x}(t)&=\Lambda x(t)-BR^{-1}B^\top\lambda(t) \nonumber\\
    &=\Lambda x(t)-BR^{-1}B^\top\mathrm{e}^{(t_f-t)\Lambda^\top}\lambda(t_f).
\end{align}
The solution of (\ref{eq:dynamics-lambda-op}) at $t$ is given by
\begin{equation}\label{eq:dynamics-sol0}
    x(t)=\mathrm{e}^{t\Lambda}x_0-\hat{\Psi}(t)\lambda(t_f), 
\end{equation}
where
\begin{equation*}
    \hat{\Psi}(t)=\int_0^{t}\mathrm{e}^{(t-\tau)\Lambda}BR^{-1}B^\top\mathrm{e}^{(t-\tau)\Lambda^\top}~\mathrm{d\tau}.
\end{equation*}

Substituting $\lambda(t_f)$ from (\ref{eq:neccesary-lam-op}) then into (\ref{eq:dynamics-sol0}) at $t_f$ yields 
 \begin{align*}
x(t_f)=\mathrm{e}^{t_f\Lambda}x_0-\hat{\Psi}(t_f)L x(t_f),
\end{align*}
which can be rewritten as
 \begin{align*}
\big(I+\hat{\Psi}(t_f)L\big)x(t_f)=\mathrm{e}^{t_f\Lambda}x_0,
\end{align*}
or equivalently,
 \begin{align}\label{eq:state-x(tf)-op}
x(t_f)=\hat{H}^{-1}(t_f)\mathrm{e}^{t_f\Lambda}x_0,
\end{align}
where
\begin{equation*}
    \hat{H}(t_f)=I+\hat{\Psi}(t_f)L.
\end{equation*}

Finally, the associated opinion trajectory with optimal control actions for the global opinion formation problem in (\ref{eq:nonstub-opt-g}) is given by 
\begin{align}
            x(t)&=\big(\mathrm{e}^{t\Lambda}-\hat{\Psi}(t)L \hat{H}^{-1}(t_f)\mathrm{e}^{t_f\Lambda}\big)x_0.  \label{eq:optimal-tr}
\end{align}
Substituting $\lambda(t_f)$ from (\ref{eq:neccesary-lam-op}) into (\ref{eq:optimal00}) and then using (\ref{eq:state-x(tf)-op}), the control actions are
 \begin{align}
         u(t&)=-R^{-1}B^\top\mathrm{e}^{(t_f-t)\Lambda^\top}L\hat{H}^{-1}(t_f)\mathrm{e}^{t_f\Lambda}x_0. \label{eq:optimal}
 \end{align}

The nonsingularity of $\hat{H}(t_f)$ is a prerequisite for (\ref{eq:optimal-tr}) and (\ref{eq:optimal}). 

\section{Simulation Results}\label{simulation}

In this section, I apply the theoretical results to the well-known social network of Zachary’s Karate Club~\cite{Zachary}. With its underlying social graph given in Fig.~\ref{fig:karate}, Zachary’s Karate Club network, created by Wayne Zachary in 1977, is a social network between the members of a karate club at a US university. This network has 34 nodes and 78 edges, where the nodes and edges represent the club members and their mutual friendships. I show the consequences of the execution of game strategies in Theorem~\ref{theorem:nonstub} on the evolution of opinions for Zachary’s Karate Club network.  
\begin{figure}
\centering
       \includegraphics[width=0.5\textwidth]{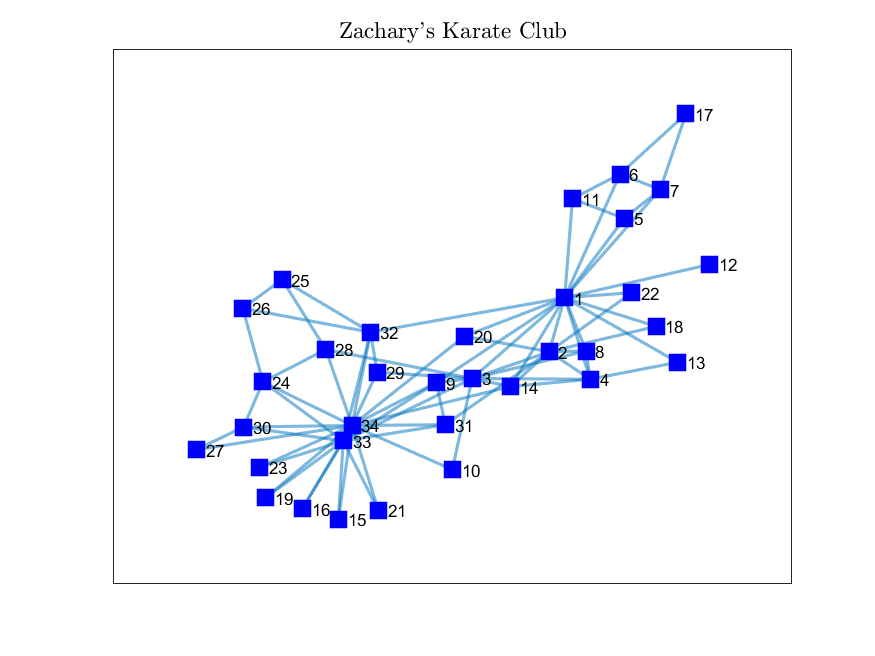}
\caption{Zachary's Karate Club network. }
\label{fig:karate}
\end{figure} 

The simulations are carried out for the time interval of $t_f=10$ and the initial opinion vector $x_0$ for each experiment is selected from a uniform distribution in $(-1.5,-0.5)\cup(0.5,1.5)$. Such a distribution of the initial opinions shows that the population's initial opinions are split into two groups or clusters. The evolution of club members' opinions under the continuous time HK model (\ref{eq:dynamics0}) prior to any optimization (i.e., $u_i(t)=0$) is according to $x(t)=\mathrm{e}^{t\Lambda}x_0$. The corresponding opinion trajectories are illustrated in Fig.~\ref{fig:HK}. As it is seen, the club members' opinions reach nearly a consensus about the average opinion in the network. I should note that a consensus is realized at a long enough horizon length.
\begin{figure}
\centering
       \includegraphics[width=0.32\textwidth]{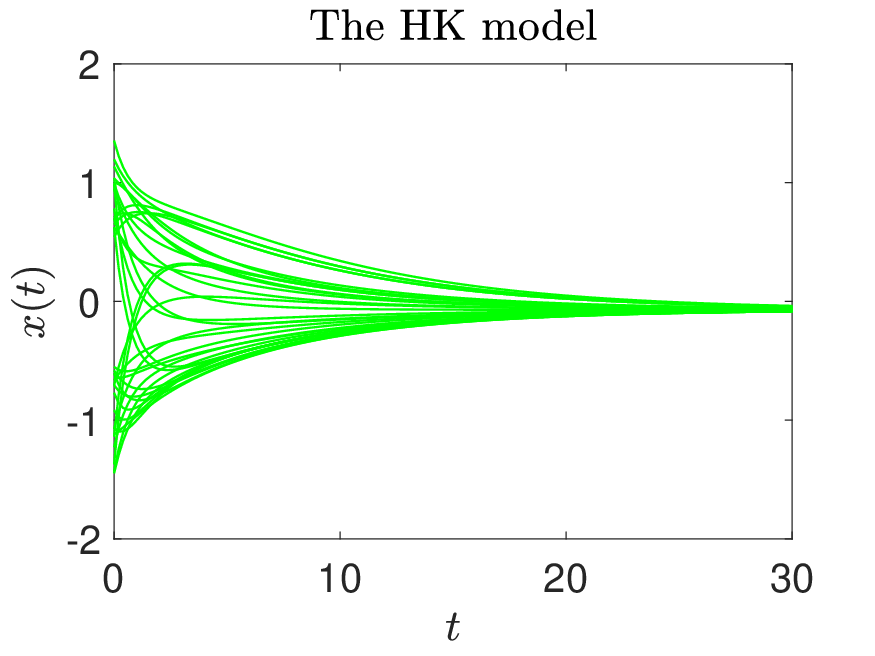}
\caption{Evolution of opinion trajectories in the HK model for Zachary’s network ($u_i(t)=0~\forall i\in\mathcal{V}$).}
\label{fig:HK}
\end{figure} 

For optimizing the opinions, in the continuous-time HK model (\ref{eq:dynamics0}), I let $b_i=1$. For Zachary's Karate Club social network, the opinion trajectories as a result of optimization (\ref{eq:nonstub-opt}) and (\ref{eq:nonstub-opt-g}) are shown in Fig.~\ref{fig:nonstub}, by the green and red colors, respectively. According to this figure, the non-stubborn network under the framework of global optimization reaches a consensus on the average opinions for $r_i=1$. Therefore, a consensus is the global social optimum norm. Using their game strategies, the network members' final opinions have moved toward each other, but they have not reached a consensus. By imposing large $r_i$, the network members seek to minimize their influence effort rather than their disagreement with others. This is because, given a sufficiently large $r_i$, the related optimization is reduced to the minimization of the influence effort term. As it is seen from Fig.~\ref{fig:nonstub}, for a relatively large $r_i=20~\forall i\in\mathbf{V}$, the final opinion trajectories have come closer to each other, but not as much as in the case for $r_i=1$. 
\begin{figure}
\centering
       \includegraphics[width=0.32\textwidth]{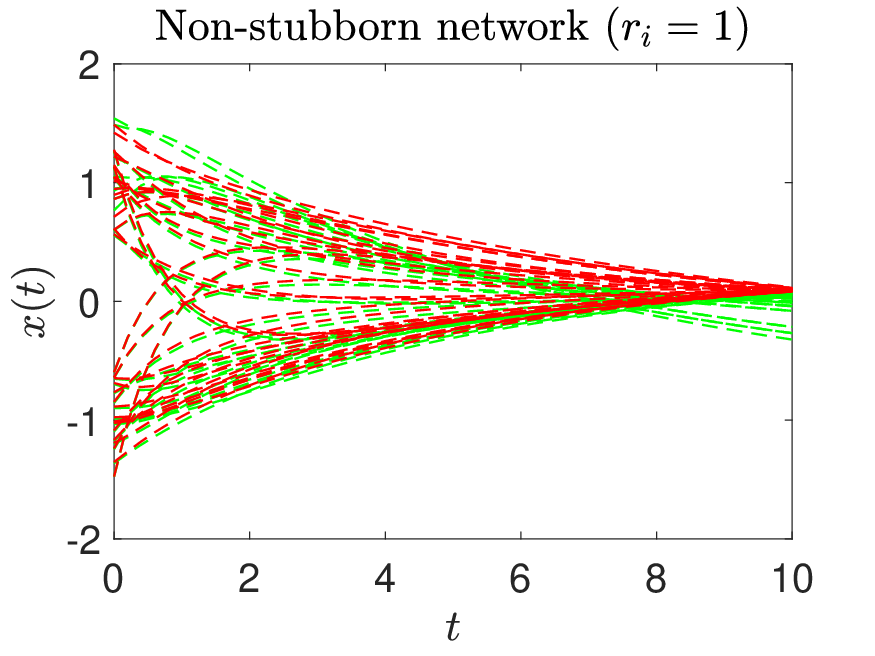}
       \includegraphics[width=0.32\textwidth]{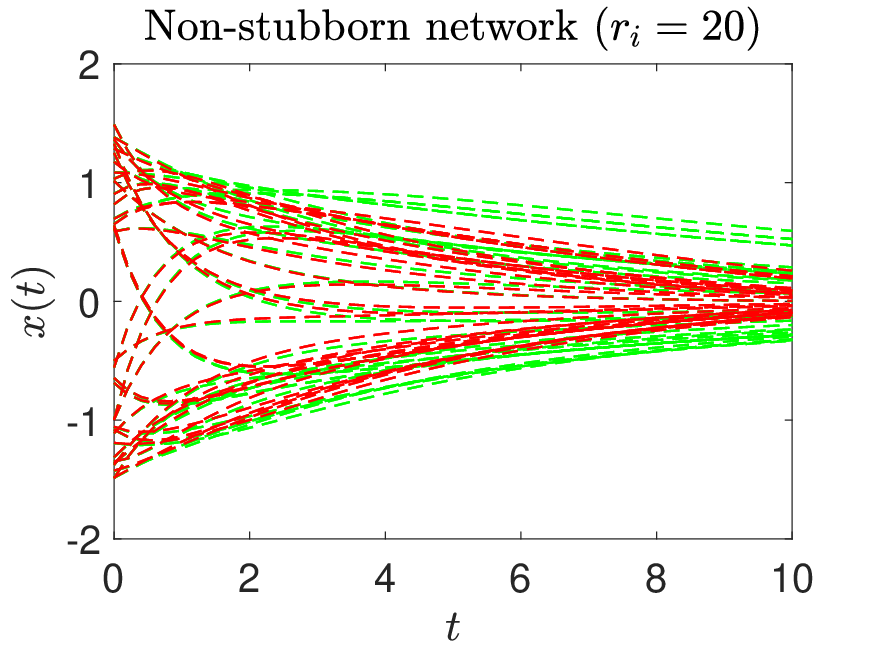}
\caption{Opinion trajectories associated with the game strategies and global optimal actions for the non-stubborn Zachary’s network, represented by the green and red colors, respectively.}
\label{fig:nonstub}
\end{figure} 

The finite horizon length is an important factor in the formation of final opinions. From Fig.~\ref{fig:karate} and Fig.~\ref{fig:nonstub}, it is seen that the optimization reduces the finite horizon length for final opinion formation significantly. Fig.~\ref{fig:horizon} shows that a consensus emerges from the optimization of opinions at a long horizon length.  
\begin{figure}
\centering
       \includegraphics[width=0.32\textwidth]{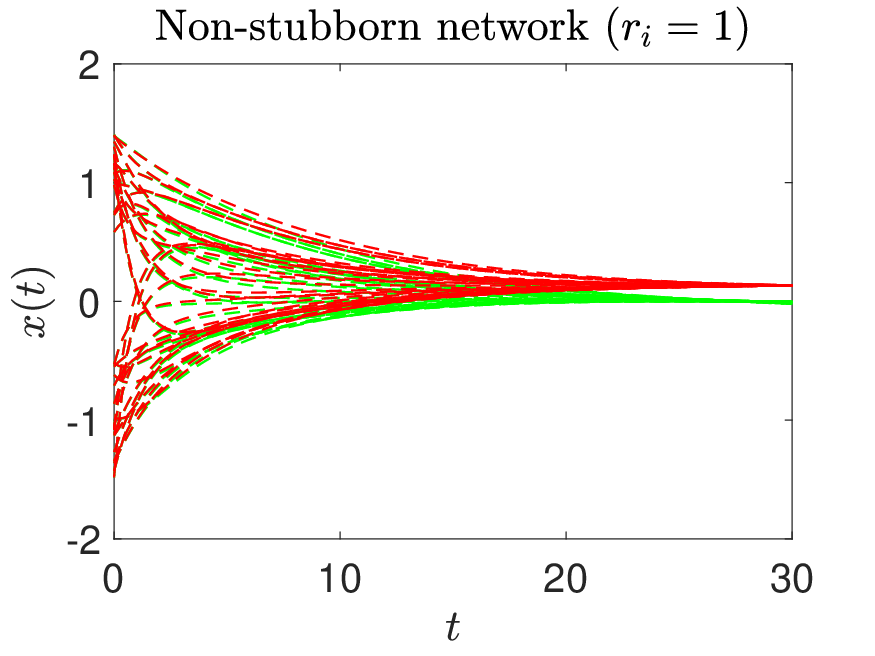}
\caption{A consensus of final opinions is on the horizon.}
\label{fig:horizon}
\end{figure} 

\section{Conclusions}\label{conclusions}
This paper studied optimizing opinions of the Hegselmann-Krause model in a game-theoretic framework. The execution of game equilibrium strategies in the well-known Zachary’s Karate Club social network with non-stubborn individuals showed that the opinions came close to each other but that a consensus of final opinions did not emerge. However, the club members might be stubborn individuals whose prejudices influence the formation of their final opinions. A future research direction could be optimizing opinions, considering the stubbornness of individuals in the game-theoretic analysis of social networks. Feedback Nash equilibrium strategies for time-varying communication graphs and confidence bound values can be investigated.

\section*{Acknowledgment}

This work was supported by SGS, V\v{S}B - Technical University of Ostrava, Czech Republic, under grant No. SP2023/12 “Parallel processing of Big Data X”.

\end{document}